\documentclass[letterpaper, 10 pt, conference]{ieeeconf}  

\IEEEoverridecommandlockouts                              

\usepackage{graphicx}      
\usepackage{amsmath}
\usepackage{amssymb}
\usepackage{amsfonts}
\usepackage{booktabs}
\usepackage{graphicx}
\usepackage{tikz}
\usetikzlibrary{arrows.meta}
\usetikzlibrary{positioning}
\usepackage{lineno}
\usepackage{bm}

\title{\LARGE \bf
Optimal Steady-State Regulation of Nonlinear Systems: A Feedback Optimization Approach}

\author{Wei He, Lei Wang, Jingjing Fang, Zheng-Guang Wu, and Hongye Su
\thanks{*This work was supported by Zhejiang Provincial Natural Science Foundation of China under Grant No. LR26F030003 and Natural Science Foundation of China under Grant No. 62573383.}
\thanks{W. He, L. Wang, Z. Wu, and H. Su are with Institute of Cyber-Systems and Control, Zhejiang University, Hangzhou, 310027, China. J. Fang is  with Zhejiang Development \& Planning Institute.
         Corresponding author: Lei Wang ({\tt\small lei.wangzju@zju.edu.cn})}%
    }
\newtheorem{problem}{Problem}
\newtheorem{assumption}{Assumption}
\newtheorem{theorem}{Theorem}
\newtheorem{remark}{Remark}
\newtheorem{lemma}{Lemma}

\begin{document}

\maketitle
\thispagestyle{empty}
\pagestyle{empty}

\begin{abstract}

This paper studies the optimal steady-state regulation problem for a class of nonlinear systems. The regulated output is required to track a trajectory that minimizes a steady-state cost depending on both the output and the exo-system. To characterize the desired steady-state behavior, we introduce a generalized regulator equation and develop a two-layer framework. The optimization layer generates the optimal reference online through a steady-state generator combined with gradient-feedback optimization dynamics, while the control layer employs backstepping to stabilize the plant and achieve the desired optimal steady-state regulation. Under suitable immersion and minimum-phase assumptions, global asymptotic stability is established. 

\end{abstract}

\section{Introduction}\label{sec:introduction}
The problem of output regulation investigates a plant cascaded with an exo-system, and aims at designing a feedback regulator driven by regulation output and extra measurement to steer the regulation output to a given point. The internal model-based method was first introduced and shown to be an effective design and analysis tool in \cite{francis1976internal}, and has been well developed for nonlinear systems \cite{isidori1990output}, leading to the Regulator Equation, and for hybrid systems \cite{marconi2013internal} and networked systems \cite{wang2019robust}. To design the internal model, \cite{huang2004nonlinear} introduced the immersion condition, which is removed by \cite{marconi2008uniform}, establishing the existence of regulator in the pre-processing and minimum-phase setting. In \cite{wang2020pre} and \cite{wang2022robust}, which considered systems with extra measurement, a washout filter is constructed based on the internal model, blocking the unknown non-vanishing measurement in steady state.

However, in some cases, the regulated output is required to track an optimal steady-state trajectory that is not prescribed a priori but instead determined by an optimization problem. In this case, the reference signal often needs to be determined by the system itself. \cite{jokic2009constrained} defined the optimal steady-state regulation (OSSR) problem for a class of disturbed nonlinear systems with equality and inequality constraints, and designed control algorithms based on barrier functions and Karush-Kuhn-Tucker conditions, respectively.  \cite{lawrence2020linear} provided a controller design framework for the convex optimal steady-state control problem of disturbed linear systems, and proved that the existence of the proposed optimality model is a sufficient condition for reducing the optimal steady-state control problem to a stabilization problem. Based on this, \cite{simpson2022low} proposed several design methods for optimality models and stabilizers based on low-gain integral control and timescale separation analysis.
In the practical distributed power energy systems, Li et al. studied the problem from the perspective of optimal steady-state control \cite{li2015connecting,zhang2017distributed}. In \cite{li2015connecting}, the relationship between Automatic Generation Control and Economic Dispatch Control was investigated from an optimization viewpoint, combining them to propose a distributed optimization algorithm to improve the economic efficiency of classical control strategy. \cite{zhang2017distributed} proposed a general control update framework and studied the necessary and sufficient conditions under which a linear time-invariant system can be interpreted as the gradient algorithm for unconstrained convex optimization or unconstrained saddle-point problems. Due to the widespread attention on distributed control in recent years, some research has introduced the OSSR problem into networked systems, proposing a class of distributed optimal steady-state control problems and achieving certain theoretical advances \cite{notarnicola2023gradient,jiang2024distributed,xian2025fully}.


In this paper, we consider the convex OSSR problem for a class of nonlinear systems in lower-triangular form, where the regulated output is expected to track a reference signal which minimizes the cost function involving the system output and a signal generated by an exo-system, with the gradient measurable.
The main contribution of this paper is that we construct a two-layer framework to solve the OSSR problem. By combining the regulator equation with the optimality condition, we formulate the Generalized Regulator Equation, based on which we design the dynamics of the optimization layer with the immersion condition. In the control layer, we construct the stabilizer based on backstepping to facilitate an ISS analysis of the system. Under the proposed framework, the global asymptotic stability of the optimal steady-state manifold is guaranteed.

The remainder of the paper is organized as follows. In Section \ref{sec:problem formulation} the problem is formulated and we present some preliminaries and standing assumptions. Section \ref{sec:main resuls} is the main result of this work, which proposes a two-layer framework and the methodology of designing the optimization and control layer respectively. Then, in Section \ref{sec:simulation} simulation examples are shown to illustrate the theoretical results. A brief conclusion is made in Section \ref{sec:conclusion}.

\section{Problem Formulation}
\label{sec:problem formulation}

Consider nonlinear systems in a lower triangular form as

\begin{equation}\label{nonlinear plant}
	\begin{cases}
		\dot{w}&=s(w) \\
		\dot{x}_0&=f_0(x_0,x_1,w)\\
		\dot{x}_1&=f_1(x_0,x_1,w)+b_1(w)x_2\\
		\dot{x}_2&=f_2(x_0,x_1,x_2,w)+b_2(w)x_3\\
	\end{cases}
\end{equation}
where  $w\in\mathbb{R}^{n_w}$ is the exogenous state, $x_0\in\mathbb{R}^{n_0}$ is the state of internal dynamics,   $y_r:=x_1\in\mathbb{R}$ is the regulated output, and $u:=x_3\in\mathbb{R}$ is the control input to be designed. 
Assume  $b_1(w),b_2(w)\geq b_{min}>0$ for all $w\in\mathcal{W}$.

As customary in the problem of output regulation, the $w$-exo-system is assumed to be neutrally stable, that is, there exists a compact set $\mathcal{W}\subset\mathbb{R}^{n_w}$ such that $w\in\mathcal{W}$ is invariant for the dynamics $\dot{w}=s(w)$. Note that in the conventional output regulation problem, the controller is designed to regulate the $y_r$ to zero or a given reference.
In this paper, we consider an optimal steady-state regulation (OSSR) problem (as given below), where the steady states of $y_r$ are unavailable, but determined by the plant \eqref{nonlinear plant} together with an optimization problem.

\begin{problem}\label{prob2}
	This paper aims to design a dynamic controller for the plant \eqref{nonlinear plant}, driven by $y_r$, such that in the presence of the exogenous signal $w(t)\in\mathcal{W}$, the resulting closed-loop trajectories are bounded, and the regulated output $y_r$ is asymptotically steered to a reference signal $y_r^\ast$, which minimizes the cost function $F_c({y}_r,w)$ in steady state.   
\end{problem}

\begin{remark}
	Though the results in this paper concern systems with relative degree $r=2$, the subsequent control design can be directly extended to systems with higher relative degrees.
\end{remark}


Next, we list some useful standing assumptions.

\begin{assumption}\label{A:convexity and smoothness}
	The cost function $F_c({y}_r,w)$ is sufficiently smooth, and  strongly convex with respect to ${y}_r$, uniformly in $w\in  \mathcal{W}$, in the sense that there exists $\omega_{\text{min}}>0$ such that
	\begin{equation*}
		\begin{aligned}
			({\nabla_{{y}_r} F_c({y}_1,w)-\nabla_{{y}_r} F_c({y}_2,w)})/({y_1-y_2}) \geq \omega_{\text{min}}
		\end{aligned}
	\end{equation*}
	holds for any distinct $y_1, y_2\in \mathbb{R}$ and all $w\in \mathcal{W}$. Here $\nabla_{{y}_r} F_c({y}_r,w)$ denotes the gradient with respect to $y_r$, and is \emph{measurable} for controller design.
\end{assumption}
\begin{remark}
	For simplicity, Assumption \ref{A:convexity and smoothness} restricts the cost function $F_c(\cdot,w)$ to be strongly convex. However, it is noted that due to the presence of unknown $w$, the optimum is not solvable directly. In this setting, its gradient is assumed to be measurable online for implementation of feedback optimization. Such assumption is feasible in many scenarios, e.g., the multiple-source localization problem with acoustic energy measurements available \cite{meng2020multiple} \cite{zhang2015distributed}, and the distributed optimal coordination problem \cite{zhang2017distributed}. 
\end{remark}

By adapting the classical nonlinear regulator equation to our OSSR problem setting, we make the following assumption about the existence of solutions for the Generalized Regulator Equation (GRE).
\begin{assumption}\label{A:GRE}
	There exist smooth functions $\pi_{0}:\mathcal{W}\to\mathbb{R}^{n_0}$, and $\pi_{1}:\mathcal{W}\to\mathbb{R}$ such that
	\begin{equation*} \label{eq:GRE}
		\begin{aligned}
			&\nabla F_c(\pi_{1}(w),w) =0 \\
			&\mathcal{L}_{s(w)}\pi_{0}(w)=f_0(\pi_{0}(w),\pi_{1}(w),w) \,.
		\end{aligned}
	\end{equation*}
\end{assumption}

Define error states $\tilde{x}_0:=x_0-\pi_{0}(w)$ and $\tilde{x}_1:=x_1-\pi_{1}(w)$. Then there follows the error dynamics of the form
\begin{equation}\label{eq:t_x_0}
	\begin{aligned}
		&\dot{\tilde{x}}_0=\tilde{f}_0(\tilde{x}_0,\tilde{x}_1,w) \\
	\end{aligned}
\end{equation}
where $\tilde{f}_0(\tilde{x}_0,\tilde{x}_1,w)={f}_0({x}_0,{x}_1,w)-{f}_0(\pi_{0}(w),\pi_{1}(w),w)$. Next, we impose the following strongly minimum-phase assumption.
\begin{assumption}\label{strong minimum phase property}
	The error dynamics \eqref{eq:t_x_0} is input-to-state stable (ISS) with state $\tilde x_0$ and input $\tilde x_1$, uniformly in $w\in\mathcal{W}$. Namely, there exists $C^1$ function $ V_0(\tilde{x}_0)$  such that 
	\begin{equation}\label{ineq:minimum phase}
		\begin{aligned}
			\underline{\alpha}_0(|\tilde{x}_0|)\leq V_0(\tilde{x}_0)\leq\bar{\alpha}_0(|\tilde{x}_0|)\\
			\dot{V}_0(\tilde{x}_0)\leq-\alpha_0(|\tilde{x}_0|)+\alpha_{r_1}(|\tilde{x}_1|)
		\end{aligned}
	\end{equation}
	where $\underline{\alpha}_0, \bar{\alpha}_0, \alpha_0, \alpha_{r_1}\in\mathcal{K}_{\infty}$.
\end{assumption}

Let \(\pi_{2}(w)=(\mathcal{L}_{s(w)}\pi_1(w)-f_1(\pi_0(w),\pi_1(w),w))/b_1(w)\), \(\pi_{3}(w)=(\mathcal{L}_{s(w)}\pi_2(w)-f_2(\pi_0(w),\pi_1(w),\pi_2(w),w))/b_2(w)\).
The next assumption, which is inherited from \cite{byrnes2004nonlinear} and restricts the maps $\pi_{i}(w)$, $i=1,2,3$ to fulfill the immersion conditions, is stated as follows.
\begin{assumption}\label{immersion condition assumption}
	There exist  positive numbers $d_i$ and bounded smooth  functions $\phi_{i}:\mathbb{R}^{d_i}\to\mathbb{R}$, $i=1,2,3$ such that for all $w\in\mathcal{W}$, 
	\begin{equation}\label{immersion condition}
		\mathcal{L}_{s(w)}^{d_i}\pi_i(w)=\phi_{i}(\pi_i(w),\mathcal{L}_{s(w)}\pi_i(w),...,\mathcal{L}_{s(w)}^{d_i-1}\pi_i(w)). \\[1mm]
	\end{equation}
\end{assumption}

\begin{remark}
    Assumption \ref{strong minimum phase property} is imposed to simplify the analysis, which could be relaxed by incorporating the extra measurement to stabilize the unstable internal dynamics as in \cite{wang2020pre}. Immersion conditions \eqref{immersion condition} are introduced to derive constructive design solutions and simplify the forthcoming discussion, though they are shown to be unnecessary in \cite{marconi2007output}. 
\end{remark}
\section{Main Results}\label{sec:main resuls}

The proposed controller design framework is composed of two layers: the optimization layer and the control layer.

\subsection{Design of optimization layer}

We propose the gradient measurement-based optimization dynamics as
\begin{equation}\label{eq:xi}
	\begin{cases}
		\dot{\xi}_r=A_r\xi_r+B_r\phi_{1}(\xi_r)-G_r\nabla F_c(y_r,w)+\kappa(y_r-y_r^r)\\
		y^r_r=C_r\xi_r
	\end{cases}
\end{equation}
where $\xi_r\in\mathbb{R}^{d_1}$, 
$(A_{r},B_{r},C_{r})\in\mathbb{R}^{d_1\times d_1}\times \mathbb{R}^{d_1}\times\mathbb{R}^{1\times d_1}$ is a triplet in the prime form, 
$\kappa(\cdot):\mathbb{R}\to\mathbb{R}^{d_1}$ is a smooth function, and 
\(G_r=\mbox{col}(g_r\lambda_1,...,g_r^{d_1}\lambda_{d_1}) \in\mathbb{R}^{d_1}\),

with $\lambda_i$ to be chosen and constant $g_r>0$.

\begin{remark}
   The dynamics in \eqref{eq:xi} consist of three parts. The term $A_r\xi_r+B_r\phi_1(\xi_r)$, motivated by the immersion condition, serves to reproduce the steady-state reference of $y_r$. The gradient-feedback term is used to enforce the optimality condition and hence steer the generated reference toward the minimizer of $F_c$. The correction term $\kappa(y_r-y_r^r)$ is further introduced to provide extra stability margin and to guarantee the desired closed-loop stability property.
\end{remark}

Then define $e=y_r-y_r^r$,
\(
		\pi_{\xi_r}(w)=\text{col}(\pi_{1}(w),\mathcal{L}_{s(w)}\pi_{1}(w),...,\mathcal{L}_{s(w)}^{d_{1}-1}\pi_{1}(w))
\),

 and $\tilde{\xi}_r=\xi_r-\pi_{\xi_r}(w)$.
With Assumption \ref{A:convexity and smoothness}, we can obtain from  \eqref{eq:xi} 
\begin{equation} \label{eq:txir}
	\begin{aligned}
		\dot{\tilde{\xi}}_r
		=&A_r\tilde{\xi}_r-G_r\Omega_1(C_r\tilde{\xi}_r,w)+B_r\tilde{\phi}_{1}(w,\tilde{\xi}_r)\\&-G_r \Omega_2(e,w)+\kappa(e)\\
	\end{aligned}
\end{equation}
where $\tilde{\phi}_{1}(\tilde{\xi}_r,w)={\phi}_{1}(\xi_r)-{\phi}_{1}(\pi_{\xi_r}(w))$, 
$\Omega_1(C_r\tilde{\xi}_r,w)=\nabla F_c(C_r\xi_r,w)-\nabla F_c(C_r\pi_{\xi_r}(w),w)$, $\Omega_2(e,w)=\nabla F_c(y_r, w)-\nabla F_c(C_r \xi_r,w)$.
It is clear that $\Omega_1(0,w)=0$ and $\Omega_2(0,w)=0$. Moreover,  by the Mean Value Theorem, we have $\Omega_1(C_r\tilde{\xi}_r,w)= \omega(\bar y_r^r,w)C_r\tilde{\xi}_r$ for some $\bar y_r^r$ taking value between $C_r\tilde{\xi}_r$ and $0$. For convenience, we omit the argument of function $\omega$ and note that by Assumption \ref{A:convexity and smoothness},  \(\omega \geq \omega_{\text{min}}>0\).

We first state a useful Lemma about the uniform Hurwitz property of a class of time-varying matrices.
\begin{lemma} \label{Lemma Hurwitz}
	Given a continuous signal $\omega(t)$ with
	$\omega(t)\ge\omega_{\min}>0$ and a matrix $M_n(t)\in\mathbb{R}^{n\times n}$
	of the form
	\[
	M_n(t)=
	\begin{pmatrix}
		-a_{n-1} \omega(t) & 1 & 0 & \cdots & 0\\
		-a_{n-2} \omega(t) & 0 & 1 & \cdots & 0\\
		\vdots               & \vdots & \ddots & \ddots & \vdots\\
		-a_{1}\omega(t) & 0 & \cdots & 0 & 1\\
		-a_0 \omega(t) & 0 & \cdots & 0 & 0
	\end{pmatrix},
	\]
	there exist real numbers $a_0,\dots,a_{n-1}$, a constant
	$\lambda_n>0$ and a symmetric positive definite matrix
	$P_n\in\mathbb{R}^{n\times n}$
	such that
	\begin{equation*}\label{eq:An}
		M_n(t)^\top P_n + P_n M_n(t) \le -\lambda_n I_n,\qquad \forall t\ge0.
	\end{equation*}
	
\end{lemma}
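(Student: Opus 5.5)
The plan is to remove $\omega(t)$ from all but one entry of $M_n(t)$ through a constant change of coordinates, and then to close the argument with a block-diagonal quadratic Lyapunov function. The matrix $P_n$ in the original coordinates is obtained by transforming back. Write $x=\mathrm{col}(x_1,z)$ with $z=\mathrm{col}(x_2,\dots,x_n)\in\mathbb{R}^{n-1}$. Then $\dot x=M_n(t)x$ reads
\begin{equation*}
\dot x_1=-a_{n-1}\omega x_1+e_1^\top z,\qquad \dot z=J z-\bar a\,\omega x_1,
\end{equation*}
where $J$ is the $(n-1)\times(n-1)$ upper shift matrix, $e_1$ is the first unit vector, and $\bar a=\mathrm{col}(a_{n-2},\dots,a_0)$. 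For $n=1$ the claim is immediate: take $P_1=1$ and any $a_0>0$, which gives $-2a_0\omega\le-2a_0\omega_{\min}$.

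For $n\ge 2$, I fix a vector $c\in\mathbb{R}^{n-1}$ such that the constant matrix $H:=J-c\,e_1^\top$ is Hurwitz. This is possible because $H$ is a companion-type matrix whose characteristic polynomial has coefficients $c$, so any Hurwitz polynomial can be assigned. I then set $\bar a:=a_{n-1}c$, with $a_{n-1}>0$ still free, and introduce $\zeta:=z-c\,x_1$. A direct computation gives
\begin{equation*}
\dot\zeta=Jz-\bar a\omega x_1-c\bigl(-a_{n-1}\omega x_1+e_1^\top z\bigr)=H\zeta+(Jc-c\,c_1)x_1,
\end{equation*}
because the $\omega$-terms cancel exactly by the choice $\bar a=a_{n-1}c$. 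Similarly, $\dot x_1=-a_{n-1}\omega x_1+c_1x_1+e_1^\top\zeta$. The time-varying gain therefore appears only in the scalar diagonal term $-a_{n-1}\omega x_1^2$, and there it has the helpful sign.

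Next I take $Q=Q^\top>0$ with $H^\top Q+QH=-I$ and consider $V=x_1^2+\zeta^\top Q\zeta$. Its derivative is
\begin{equation*}
\dot V\le -2(a_{n-1}\omega-c_1)x_1^2+2|x_1||\zeta|+2|Q||Jc-c\,c_1||x_1||\zeta|-|\zeta|^2 .
\end{equation*}
Using Young's inequality on the cross terms, which have $t$-independent coefficients, I get $\dot V\le-(2a_{n-1}\omega_{\min}-2c_1-k)x_1^2-\tfrac12|\zeta|^2$. Here $k$ depends only on $c$ and $Q$, and I have used $\omega\ge\omega_{\min}$. Since $\omega$ enters only with a negative sign, no upper bound on $\omega$ is needed. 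Choosing $a_{n-1}$ large enough makes the first coefficient at least $\tfrac12$.

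Finally, let $T$ be the invertible unit lower-triangular matrix with $\mathrm{col}(x_1,\zeta)=Tx$, and set $P_n:=T^\top\mathrm{diag}(1,Q)T$. The estimate above then yields $M_n^\top P_n+P_nM_n\le-\tfrac12 (T^\top T)\le-\lambda_n I_n$ with $\lambda_n=\tfrac12\sigma_{\min}(T)^2>0$, uniformly in $t$. This works because the bound was established pointwise for each frozen value $\omega(t)\ge\omega_{\min}$. The coefficients are $a_{n-1}$ (large) and $(a_{n-2},\dots,a_0)=a_{n-1}c$.

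The main obstacle is making sure the unbounded, time-varying $\omega$ never multiplies a cross term. The exact cancellation $\bar a=a_{n-1}c$ in the $\zeta$-coordinates is the key step, and I would verify that computation carefully. Everything after it is a routine constant-coefficient domination argument.
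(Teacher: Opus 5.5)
Your argument is correct. The choice $\mathrm{col}(a_{n-2},\dots,a_0)=a_{n-1}c$ does cancel every $\omega$-term in $\dot\zeta$, and the remaining cross terms have $\omega$-free coefficients. Because the estimate is pointwise in the frozen value $\omega(t)\ge\omega_{\min}$, you need neither an upper bound on $\omega$ nor its continuity. That matters here, since in the paper $\omega$ is a mean-value slope of $\nabla F_c$.

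The paper gives no argument of its own, only a pointer to Lemma 2.1 of Gauthier--Kupka. The natural adaptation of that high-gain-observer machinery is of Lyapunov-equation type. Write $M_n=A-\omega KC$ with $A$ the upper shift, $C=e_1^\top$ and $K=\mathrm{col}(a_{n-1},\dots,a_0)$. Take $S=S^\top>0$ solving $S+A^\top S+SA=C^\top C$, which is solvable since $-(A+\tfrac12 I)$ is Hurwitz and $(C,A)$ is observable. Then \emph{define} $K:=\gamma S^{-1}C^\top$. This gives $M_n^\top S+SM_n=-S-(2\gamma\omega-1)C^\top C\le-S$ once $2\gamma\omega_{\min}\ge1$, with no coordinate change and no domination step.

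The two routes rest on the same identity. In your construction $TK=\mathrm{col}(a_{n-1},0)$ and $T^\top e_1=e_1$, so $P_nK=a_{n-1}e_1$. Hence $M_n^\top P_n+P_nM_n=A^\top P_n+P_nA-2a_{n-1}\omega\,e_1e_1^\top$, so $\omega$ enters only through a negative semidefinite rank-one term.

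The difference is the order of construction. Fixing $P$ first and reading off the gain is shorter and yields the explicit rate $\lambda_n=\lambda_{\min}(S)$. However, the resulting gain (the first column of $S^{-1}$) has no visible structure. Fixing the gain first, as you do, makes the frozen characteristic polynomial $s^n+a_{n-1}\omega\,(s^{n-1}+c_1s^{n-2}+\dots+c_{n-1})$. This exposes the underlying minimum-phase/root-locus mechanism and lets you pick any Hurwitz $c$. The cost is a Young-inequality domination and a less explicit $\lambda_n$ through $\sigma_{\min}(T)$.
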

The proof is directly adapted from Lemma 2.1 in \cite{gauthier2001deterministic} and is thus omitted.


Then we state the ISS property of the dynamics in the optimization layer, with the proof given in the Appendix A.
\begin{lemma}\label{lemma 1}
	Under Assumptions \ref{A:convexity and smoothness}, \ref{A:GRE}  and \ref{immersion condition assumption},  there  exists $g_r^*>0$ such that for any $g_r>g_r^*$ and smooth function $\kappa(\cdot)$, the $\tilde{\xi}_r$-subsystem \eqref{eq:txir} is ISS with state $\tilde{\xi}_r$ and input $e$. More specifically, there exists a symmetric positive definite matrix $Q$ such that 
	\begin{equation*}
		\dot{V}_o(\tilde{\xi}_r)
		\leq -\alpha_{\xi}|\tilde{\xi}_r|^2+\alpha_{e}(|e|)
	\end{equation*}
	with $V_o(\tilde{\xi}_r)=\tilde{\xi}_r^\top Q \tilde{\xi}_r$, $\alpha_{\xi}>0$, and $\alpha_{e}(\cdot)\in\mathcal{K}_\infty$.
\end{lemma}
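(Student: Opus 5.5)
We follow the standard high-gain observer argument. The plan is to rescale the error by $g_r$, use Lemma \ref{Lemma Hurwitz} to obtain a uniform quadratic Lyapunov function for the linear-like part, dominate the Lipschitz term $\tilde\phi_1$ by taking $g_r$ large, and treat $e$ as an input through Young's inequality.

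First, substitute $\Omega_1(C_r\tilde\xi_r,w)=\omega C_r\tilde\xi_r$ with $\omega\ge\omega_{\min}$. Introduce $D_g=\mathrm{diag}(g_r,g_r^2,\dots,g_r^{d_1})$ and the scaled state $\zeta=g_r D_g^{-1}\tilde\xi_r$, so that $\zeta_i=g_r^{1-i}\tilde\xi_{r,i}$. Since $(A_r,B_r,C_r)$ is in prime form and $G_r=D_g\lambda$ with $\lambda=\mathrm{col}(\lambda_1,\dots,\lambda_{d_1})$, the standard identities give
\[
g_rD_g^{-1}(A_r-G_r\omega C_r)D_g g_r^{-1}=g_r(A_r-\lambda\omega C_r).
\]
The right-hand matrix is exactly $g_rM_{d_1}(t)$ after choosing $\lambda_i=a_{d_1-i}$ as in Lemma \ref{Lemma Hurwitz}. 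The scaled dynamics then read
\[
\dot\zeta=g_rM_{d_1}\zeta+g_r^{1-d_1}B_r\tilde\phi_1+g_rD_g^{-1}\bigl(\kappa(e)-G_r\Omega_2(e,w)\bigr).
\]

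Second, take $V=\zeta^\top P_{d_1}\zeta$, which in the original coordinates is $V_o=\tilde\xi_r^\top Q\tilde\xi_r$ with $Q=g_r^2D_g^{-1}P_{d_1}D_g^{-1}$. Lemma \ref{Lemma Hurwitz} yields the dominant term $-g_r\lambda_{d_1}|\zeta|^2$.

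The nonlinear term needs the main care, and it is the step I expect to be the obstacle. Since $\phi_1$ is bounded and smooth it is globally Lipschitz, provided its derivative is bounded, which is the usual saturation convention, with constant $L$. Hence
\[
|g_r^{1-d_1}\tilde\phi_1|\le Lg_r^{1-d_1}|\tilde\xi_r|\le Lg_r^{1-d_1}g_r^{d_1-1}|\zeta|=L|\zeta|,
\]
because $|\tilde\xi_{r,i}|=g_r^{i-1}|\zeta_i|\le g_r^{d_1-1}|\zeta|$ for $g_r\ge1$. Its contribution is therefore bounded by $2\|P\|L|\zeta|^2$, which is independent of $g_r$. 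Choosing $g_r^*>\max\{1,4\|P\|L/\lambda_{d_1}\}$ makes the rate at least $-\tfrac{g_r\lambda_{d_1}}2|\zeta|^2$. If $\phi_1$ were only locally Lipschitz, this argument would give only a semiglobal result, so the global Lipschitz property is where the difficulty lies.

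Third, handle the input terms. Here $\Omega_2(e,w)=\nabla F_c(C_r\xi_r+e,w)-\nabla F_c(C_r\xi_r,w)$ is bounded by $\gamma(|e|)$ with $\gamma$ of class $\mathcal K$. This follows if $\nabla F_c$ is globally Lipschitz in $y_r$; otherwise the bound depends on $\xi_r$. In that case I would use the mean value form $\bar\omega e$ and dominate it with the freedom in $\kappa$, or assume a uniform Lipschitz bound. The term $\kappa(e)$ is smooth with $\kappa(0)=0$, as required implicitly for $e=0$ to be consistent, so $|\kappa(e)|\le\gamma_\kappa(|e|)$. Young's inequality then gives
\[
2\zeta^\top P g_rD_g^{-1}(\cdot)\le \tfrac{g_r\lambda_{d_1}}4|\zeta|^2+c(g_r)\bigl(\gamma(|e|)+\gamma_\kappa(|e|)\bigr)^2.
\]
Finally, convert back using $|\zeta|\ge g_r^{1-d_1}|\tilde\xi_r|$. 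This yields $\dot V_o\le-\alpha_\xi|\tilde\xi_r|^2+\alpha_e(|e|)$ with $\alpha_\xi=\tfrac{\lambda_{d_1}}4g_r^{3-2d_1}>0$ and $\alpha_e\in\mathcal K_\infty$, where a small linear term may be added to $\alpha_e$ if needed so that it is unbounded.
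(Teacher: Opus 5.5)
Your proposal is correct and is essentially the paper's proof: the same $D_{g_r}$ high-gain rescaling (yours differs only by a harmless extra factor $g_r$), Lemma \ref{Lemma Hurwitz} for the common Lyapunov matrix, the $g_r$-independent Lipschitz bound on the $\tilde\phi_1$ term, Young's inequality on the $e$-terms, and $Q$ obtained by undoing the scaling. The hypotheses you flag ($\phi_1$ with bounded derivative, $\kappa(0)=0$, and a uniform Lipschitz bound on $\nabla F_c$, needed because $\Omega_2$ actually depends on $\xi_r$) are used tacitly by the paper as well; of your two remedies for $\Omega_2$ only the Lipschitz assumption works, since the lemma is claimed for arbitrary $\kappa$ and a function of $e$ alone cannot absorb a $\xi_r$-dependent coefficient.
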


\subsection{Design of control layer}

In this subsection, we proceed to complete the design of the control layer, which, together with the previously designed optimization layer \eqref{eq:xi}, solves the desired optimal steady-state regulation problem. 

The design of the control layer follows the idea of backstepping \cite{krstic1995nonlinear}. We first consider the $x_1$ subsystem 
by regarding $x_2$ as the control input, and design
\begin{equation}\label{internal model 1}
	\dot{\eta}_1=A_1\eta_1+B_1\phi_{2}(\eta_1)+G_1v_1
\end{equation}
where $\eta_1\in\mathbb{R}^{d_2}$, with $(A_{1},B_{1},C_{1})\in\mathbb{R}^{d_2\times d_2}\times \mathbb{R}^{d_2}\times\mathbb{R}^{1\times d_2}$ a triplet in prime form, $G_{1}=\operatorname{col}(g_1\nu_1,...,g_1^{d_2}\nu_{d_2}) $
with $\nu_i$ being coefficients of a Hurwitz polynomial, and $v_1$ is determined later.

Next we study the ISS property of the $(\xi_r,x_0,e,\eta_1)$-subsystem.
Define vector $H_1=(\nu_1,...,\nu_{d_2})^\top$.
Define $\tilde{\eta}_1:=\eta_1-\pi_{\eta_1}(w)$, where
\begin{equation*}\label{eta1 ss}
	\pi_{\eta_1}(w)=\text{col}(\pi_{2}(w),\mathcal{L}_{s(w)}\pi_{2}(w),...,\mathcal{L}_{s(w)}^{d_{2}-1}\pi_{2}(w)) 
\end{equation*}
and we can obtain from \eqref{internal model 1} and \eqref{immersion condition} that
\begin{equation*}\label{eq:error eta1}
	\dot{\tilde{\eta}}_1=A_1\tilde{\eta}_1+B_1\tilde{\phi}_{2}(\tilde{\eta}_1,w)+G_1v_1,
\end{equation*}
where $\tilde{\phi}_{2}(\tilde{\eta}_1,w)=\phi_{2}(\eta_1)-\phi_{2}(\pi_{\eta_1}(w))$. 
The dynamics of $\tilde{x}_1$ are as follows
\begin{equation*}
	\begin{aligned}
		\dot{\tilde{x}}_1=&\tilde{f}_1(\tilde{x}_0,\tilde{x}_1,w)+b_1(w)C_1\tilde{\eta}_1+b_1(w)D_1v_1\\&+b_1(w)(x_2-C_1\eta_1-D_1v_1)
	\end{aligned}
\end{equation*}
where $\tilde{f}_1(\tilde{x}_0,\tilde{x}_1,w)=f_1(x_0,x_1,w)-f_1(\pi_{0}(w),\pi_{1}(w),w)$. 

We define $\vartheta_1=x_1-y_r^r$, which gives $\vartheta_1=\tilde{x}_1-C_r\tilde\xi_r$.
By defining $\bar{\eta}_1=\tilde{\eta}_1-b_1^{-1}(w)G_1D_1^{-1}\vartheta_1$, $\vartheta_2=x_2-C_1\eta_1-D_1v_1$, we have
\begin{equation*}
	\begin{aligned}		\dot{\vartheta}_1=&\tilde{f}_1(\tilde{x}_0,\vartheta_1+C_r\tilde\xi_r,w)+b_1(w)C_1[\bar{\eta}_1+b_1^{-1}(w)G_1D_1^{-1}\vartheta_1]\\&+b_1(w)\vartheta_2-C_r\dot {\tilde \xi}_r+b_1(w)D_1v_1
	\end{aligned}
\end{equation*}

Define $\tilde z_1=\text{col}(\tilde \xi_r,\tilde x_0,\bar \eta_1)$,  $\bar{z}_1=\text{col}(\tilde{z}_1,\vartheta_1)$ with dynamics $\dot{\bar{z}}_1=\bar{f}_1(\bar{z}_1,\vartheta_2,w)$, where the smooth function $\bar f_1(\bar z_1,\vartheta_2,w)$ satisfies $\bar f_1(0,0,w)=0$. The following lemma characterizes the ISS property of the $\bar z_1$-subsystem, with the proof given in the Appendix B.
\begin{lemma}\label{lemma 2}
	Under Assumptions \ref{A:convexity and smoothness}-\ref{immersion condition assumption}, there  exist $g_1^*>0$, and a smooth control law $v_1=-\rho_1(\vartheta_1)$, 
	such that for any $g_1>g_1^*$ the $(\tilde{\xi}_r,\tilde x_0,\bar{\eta}_1,\vartheta_1)$-subsystem is ISS, with $(\tilde{\xi}_r,\tilde x_0,\bar{\eta}_1,\vartheta_1)$ as state and $\vartheta_2$ as input. Moreover, there exists an ISS Lyapunov function $V_{c_1}(\bar{z}_1)$, which satisfies
	\begin{equation*}
		\begin{aligned}
			\underline{\alpha}_{z_1}(|\bar{z}_1|^2)\leq V_{c_1}(\bar{z}_1)\leq\bar{\alpha}_{z_1}(|\bar{z}_1|^2)\\
			\dot{V}_{c_1}(\bar{z}_1)\leq -{\alpha}_{z_1}(|\bar{z}_1|^2)+{\alpha}_{\vartheta_2}(|\vartheta_2|) \\
		\end{aligned}
	\end{equation*}
	where ${\alpha}_{z_1}, {\alpha}_{\vartheta_2}\in\mathcal{K}_{\infty}$.
\end{lemma}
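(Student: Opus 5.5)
The proof will be a small-gain/ISS composition. An ISS Lyapunov function is built for each block of $\bar z_1=\mathrm{col}(\tilde\xi_r,\tilde x_0,\bar\eta_1,\vartheta_1)$, and the blocks are then combined with weights. The gain $\rho_1(\vartheta_1)$ is chosen last, large enough to dominate every cross term that enters the $\vartheta_1$ channel. The relevant ingredients are as follows.

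\begin{itemize}
\item Lemma~\ref{lemma 1} gives $V_o$ with $\dot V_o\le-\alpha_\xi|\tilde\xi_r|^2+\alpha_e(|e|)$.
\item We have $e=y_r-y_r^r=\vartheta_1$, so the input of the $\tilde\xi_r$-subsystem is exactly $\vartheta_1$.
\item Assumption~\ref{strong minimum phase property} gives $V_0$ with input $\tilde x_1=\vartheta_1+C_r\tilde\xi_r$. Hence $\alpha_{r_1}(|\tilde x_1|)\le\alpha_{r_1}(2|\vartheta_1|)+\alpha_{r_1}(2|C_r\tilde\xi_r|)$.
\end{itemize}

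Because $V_o$ has a quadratic decay rate, the term $\alpha_{r_1}(2|C_r\tilde\xi_r|)$ can be absorbed into a weighted $\tilde\xi_r$ term, at least semiglobally. Globally this needs a changing-supply-function argument (Sontag--Teel): we replace $V_o$ by $\ell(V_o)$ for a suitable $\mathcal K_\infty$ function $\ell$, so that its decay dominates $\alpha_{r_1}(2|C_r\tilde\xi_r|)$. Since $\kappa$ is free in Lemma~\ref{lemma 1}, we may alternatively choose $\kappa$ to enlarge margins if needed.

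Next we handle $\bar\eta_1$. With $v_1=-\rho_1(\vartheta_1)$ we compute $\dot{\bar\eta}_1=\dot{\tilde\eta}_1-b_1^{-1}G_1D_1^{-1}\dot\vartheta_1-\frac{d}{dt}(b_1^{-1})G_1D_1^{-1}\vartheta_1$. By the usual high-gain observer structure of the change of coordinates, the $v_1$ terms cancel: $G_1v_1-b_1^{-1}G_1D_1^{-1}b_1D_1v_1=0$. This yields
\[
\dot{\bar\eta}_1=(A_1-G_1D_1^{-1}C_1)\bar\eta_1+B_1\tilde\phi_2+\Delta(\tilde x_0,\tilde\xi_r,\vartheta_1,\vartheta_2,w).
\]
Here $\Delta$ collects $\tilde f_1$, $C_r\dot{\tilde\xi}_r$, $\vartheta_2$ and $\vartheta_1$ terms, and it vanishes at the origin. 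Take $D_1=1$ with $C_1=(1,0,\dots,0)$. The matrix $A_1-G_1C_1$ is then a high-gain companion matrix with Hurwitz coefficients $\nu_i$. Applying the scaling $D_g=\mathrm{diag}(g_1^{-1},\dots,g_1^{-d_2})$, its Lyapunov function $V_\eta=\bar\eta_1^\top D_gP D_g\bar\eta_1$ satisfies $\dot V_\eta\le-g_1c|D_g\bar\eta_1|^2$ plus cross terms. Since $\phi_2$ is bounded and smooth, we may take it globally Lipschitz after saturation. The Lipschitz term $B_1\tilde\phi_2$ is then dominated once $g_1>g_1^*$; this is the standard high-gain argument, as in Lemma~\ref{Lemma Hurwitz}. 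The remaining terms give an ISS estimate of $\bar\eta_1$ with respect to $(\tilde x_0,\tilde\xi_r,\vartheta_1,\vartheta_2)$.

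For $\vartheta_1$ we use $V_\vartheta=\frac12\vartheta_1^2$. Its derivative contains:
\begin{itemize}
\item the damping term $-b_1D_1\rho_1(\vartheta_1)\vartheta_1\le-b_{\min}\rho_1(\vartheta_1)\vartheta_1$;
\item the terms $\vartheta_1\tilde f_1$, $\vartheta_1b_1C_1\bar\eta_1$, $\vartheta_1C_1G_1\vartheta_1$ and $\vartheta_1C_r\dot{\tilde\xi}_r$, each bounded by Young's inequality into small multiples of $|\tilde\xi_r|^2$, $|\bar\eta_1|^2$ and $\alpha_0(|\tilde x_0|)$-type functions, plus functions of $\vartheta_1$;
\item $b_1\vartheta_1\vartheta_2$, which is treated as input.
\end{itemize}
Smoothness and $\bar f_1(0,0,w)=0$, together with compactness of $\mathcal W$, give bounds of the form $|\tilde f_1|\le\gamma(|\tilde x_0|)+\gamma(|\tilde x_1|)$ with class-$\mathcal K$ $\gamma$.

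The total function is $V_{c_1}=c_o\ell(V_o)+c_0\ell_0(V_0)+c_\eta V_\eta+V_\vartheta$. The weights are chosen in cascade order: first $(\tilde\xi_r,\tilde x_0)$, then $\bar\eta_1$, whose cross terms are absorbed by the high-gain margin. Finally $\rho_1$ is chosen smooth and large, with $\rho_1(s)s$ dominating every residual function of $\vartheta_1$, including the $\alpha_e(|\vartheta_1|)$ and $\alpha_{r_1}(2|\vartheta_1|)$ feeding the other blocks.

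\textbf{Main obstacle.} The hardest step is the nonlinear, global handling of the cross terms. There are two issues. First, $\tilde x_0$ enters the $\vartheta_1$ and $\bar\eta_1$ equations through $\tilde f_1$, which cannot be dominated by $\vartheta_1$-gain alone. Second, the weight $c_\eta$ interacts with $g_1$. I would try to handle $\tilde f_1$ by rescaling $V_0$ through $\ell_0$ so that its decay dominates $\gamma(|\tilde x_0|)^2$. If that fails globally, I would resort to a nonlinear small-gain theorem (Jiang--Teel--Praly) on the interconnection, and use the freedom in $g_1$ and $\rho_1$ to make all loop gains less than one.
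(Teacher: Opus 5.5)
Your ingredients match the paper's:
\begin{itemize}
\item $e=\vartheta_1$ as the input of Lemma~\ref{lemma 1};
\item Assumption~\ref{strong minimum phase property} with $\tilde x_1=\vartheta_1+C_r\tilde\xi_r$, split into a $\vartheta_1$-gain and a $\tilde\xi_r$-gain as in \eqref{ineq:Vtildex0};
\item the cancellation of $v_1$ in $\dot{\bar\eta}_1$, and the scaling $\hat\eta_1=D_{g_1}^{-1}\bar\eta_1$ with $g_1>g_1^*$ to dominate the $\tilde\phi_2$ term;
\item $V_{\vartheta_1}=\tfrac12\vartheta_1^2$, with $\rho_1$ chosen last.
\end{itemize}
Your saturation of $\phi_2$, done on its argument outside a neighbourhood of $\pi_{\eta_1}(\mathcal W)$, is actually needed. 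Bounded and smooth does not imply globally Lipschitz, which the paper simply asserts. By contrast, $\kappa$ cannot enlarge any $\tilde\xi_r$-margin, because it enters only through $e=\vartheta_1$.

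Only the closing step differs. The paper never constructs $V_{c_1}$. It treats $\tilde\xi_r$, $\tilde x_0$, $\bar\eta_1$ as ISS blocks driven by $\vartheta_1$ and closes the loop through $\vartheta_1$ with the small-gain theorem \cite{jiang1994small}. The requirement it imposes is $b_{\min}|D_1|\rho_1(\vartheta_1)\vartheta_1\ge\beta_{\vartheta_1}(|\vartheta_1|)+\psi_1(\tfrac12\vartheta_1^2)$ with $\psi_1>\max\{\varrho_1,\varrho_2,\varrho_3\}$. That is your fallback, except that $\rho_1$ is the only loop-gain knob; $g_1$ merely makes the $\bar\eta_1$-block ISS.

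Your weighted sum of rescaled Lyapunov functions is the Lyapunov form of the same argument. It buys the explicit $V_{c_1}$ that the statement asserts. For the bookkeeping to close, two things are needed:
\begin{itemize}
\item Keep $c_\eta$ fixed and apply Young with a $1/g_1$ weight on $|\Delta_{\eta_1}|^2$. The $\tilde x_0$- and $\tilde\xi_r$-inputs into $\bar\eta_1$ are then paid by the upstream decays once $g_1$ is large. This is the precise sense of your ``absorbed by the high-gain margin''.
\item Use the Sontag--Teel splitting on the products $\ell_0'(V_0)\alpha_{r_1}(\cdot)$ and $\ell'(V_o)\alpha_e(|\vartheta_1|)$ that the rescalings create.
\end{itemize}

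Your ``main obstacle'' is real, but it sits at the origin, not at infinity. A $C^1$ rescaling $\ell_0(V_0)$ can make the decay as large as you like at infinity, but near $0$ it remains $O(\alpha_0)$. So dominating $\gamma(|\tilde x_0|)^2$ requires $\gamma^2=O(\alpha_0)$ at $0$. Likewise, absorbing $\alpha_{r_1}(2|C_r\tilde\xi_r|)$ into the quadratic decay of $V_o$ requires $\alpha_{r_1}(s)=O(s^2)$ there.

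The first requirement fails, for instance, for $\dot{\tilde x}_0=-\tilde x_0^3+\tilde x_1$ with $\tilde f_1=\tilde x_0$. There, no ISS Lyapunov function of the $\tilde x_0$-block has decay $\ge cs^2$ near $0$.

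The small-gain fallback does not cure this. In that example the gain from $\vartheta_1$ to $\tilde x_0$ is of order $r^{1/3}$. A smooth $\rho_1$ with $\rho_1(0)=0$ gives $\rho_1(\vartheta_1)\vartheta_1=O(\vartheta_1^2)$, so the gain back is of linear order. The loop gain therefore exceeds the identity near $0$ for every smooth $\rho_1$. Equivalently, the paper's smooth $\rho_1$ exists only if $\psi_1(s)=O(s)$, that is, only if the $\varrho_i$ are of linear order at the origin. The paper uses this local condition without comment. Granting it, both your sum and the paper's small-gain argument go through; without it, neither does as written.
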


With Lemma \ref{lemma 2} in mind,  we proceed to consider the $x_2$-subsystem and construct the pre-processing internal model as
\begin{equation}\label{internal model 2}
	\begin{cases}
		\dot{\eta}_2=A_2\eta_2+B_2\phi_{3}(\eta_2)+G_2v_2\\
		u=C_2\eta_2+D_2v_2
	\end{cases}
\end{equation}
where $\eta_2\in\mathbb{R}^{d_3}$, with $(A_{2},B_{2},C_{2})\in\mathbb{R}^{d_3\times d_3}\times \mathbb{R}^{d_3}\times\mathbb{R}^{1\times d_3}$ a triplet in prime form, $v_2$ is the residual control input to be designed, and 
\begin{equation}
	G_2=\operatorname{col}(g_2\mu_1,...,g_2^{d_3}\mu_{d_3})
\end{equation}
with $\mu_i$ being coefficients of Hurwitz polynomials, and $g_2$ is determined later. Define the vector $H_2=\text{col}(\mu_1,...,\mu_{d_3})$.

Define $\tilde{x}_2=x_2-\pi_{2}(w)$, $\tilde{\eta}_2=\eta_2-\pi_{\eta_2}(w)$, where
\begin{equation*}\label{steady state of eta1 and eta2}
	\begin{aligned}
		\pi_{\eta_2}(w)&=\text{col}(\pi_{3}(w),\mathcal{L}_{s(w)}\pi_{3}(w),...,\mathcal{L}_{s(w)}^{d_{3}-1}\pi_{3}(w)) 
	\end{aligned}
\end{equation*}
Defining $\bar{\eta}_2=\tilde{\eta}_2-b_2^{-1}(w)G_2D_2^{-1}\vartheta_2$, $\tilde{z}_2=\text{col}(\bar{z}_1,\bar{\eta}_2)$,
the extended error system is modeled as
\begin{equation} \label{steady state of z and x error form}
	\begin{aligned}
		\dot{\bar{z}}_1=&\bar{f}_1(\bar{z}_1,\vartheta_2,w) \\
		\dot{\bar{\eta}}_2=&M_{\eta_2}\bar{\eta}_2+B_2\tilde{\phi}_3(\bar{\eta}_2+b_2^{-1}(w)G_2D_2^{-1}\vartheta_2,w)\\&+G_2\Delta_{\eta_2}(\bar{z}_1,\vartheta_2,w)\\
		\dot{\vartheta}_2=&\varpi_{\vartheta_2}(\bar{z}_1,\bar{\eta}_2,\vartheta_2,w) 	+b_2(w)D_2v_2
	\end{aligned}
\end{equation}
where $M_{\eta_2}=A_2-G_2D_2^{-1}C_2$, $\tilde{\phi}_3(\tilde{\eta}_2,w)=\phi_{3}(\eta_2)-\phi_{3}(\pi_{\eta_2}(w))$ and
\begin{equation*} 
	\begin{aligned}
		&\varpi_{\vartheta_2}(\bar{z}_1,\bar{\eta}_2,\vartheta_2,w)=\\&\tilde{f}_2(\bar{z}_1,\vartheta_2,w)+b_2(w)C_2(\bar\eta_2+b_2^{-1}(w)G_2D_2^{-1}\vartheta_2)\\&-C_1\dfrac{d}{dt}(\bar\eta_1+b_1^{-1}(w)G_1D_1^{-1}\vartheta_1)-D_1\dfrac{d\rho_1(\vartheta_1)}{d\vartheta_1}\dot{\vartheta}_1\\
		&\tilde{f}_2(\bar{z}_1,\vartheta_2,w)=\\&f_2(x_0,x_1,x_2,w)-f_2(\pi_{0}(w),\pi_{1}(w),\pi_{2}(w),w) \\
	\end{aligned}
\end{equation*}
The smooth function $\Delta_{\eta_2}(\bar{z}_1,\vartheta_2,w)$ and $\varpi_{\vartheta_2}(\bar{z}_1,\bar\eta_2,\vartheta_2,w)$ satisfy $\Delta_{\eta_2}(0,0,w)=0$ and $\varpi_{\vartheta_2}(0,0,0,w)=0$.
There exist positive constants $L_{\phi_3}$, $L_{\phi_3}^{'}$, class $\mathcal{K}_\infty$ functions $\varphi_{z_1}(\cdot)$,  $\varphi_{\vartheta_2}(\cdot)$ such that
\begin{equation*}
	\begin{aligned}
		&|\tilde{\phi}_3(\bar{\eta}_2+b_2^{-1}(w)G_2D_2^{-1}\vartheta_2,w)|\leq L_{\phi_3}|\bar \eta_2|+ |G_2|L_{\phi_3}^{'}|\vartheta_2|\\
		&|{\Delta_{\eta_2}}|^2\leq \varphi_{z_1}(|\bar z_1|)+\varphi_{\vartheta_2}(|\vartheta_2|)\\
	\end{aligned}
\end{equation*}
and we let $\bar\varphi_{\vartheta_2}(|\vartheta_2|)=L_{\phi_3}^{'2}|\vartheta_2|^2+\varphi_{\vartheta_2}(|\vartheta_2|)$, which is a $\mathcal{K}_\infty$ function.
Moreover, there exist positive constant $\tau_{\eta_2}$ and class $\mathcal{K}_\infty$ functions $\tau_{\vartheta_2}(\cdot)$, $\tau_{{z}_1}(\cdot)$ such that
\begin{equation*}
	\begin{aligned}
		\vartheta_2\varpi_{\vartheta_2}(\bar{z}_1,\bar{\eta}_2,\vartheta_2,w)&\leq \tau_{z_1}(|\bar z_1|)+\tau_{\eta_2}|\bar{\eta}_2|^2+\tau_{\vartheta_2}(|\vartheta_2|)\\
	\end{aligned}
\end{equation*}

It turns out that there exist a matrix $G_2$ and a choice of $v_2 $ of the same kind of $v_1$, such that the origin of the closed-loop system \eqref{steady state of z and x error form} is globally asymptotically stable. This is detailed in the next theorem.

\begin{theorem}
	Under Assumptions \ref{A:convexity and smoothness}-\ref{immersion condition assumption}, let $g_r$ satisfy Lemma \ref{lemma 1}, and take $G_1$, $v_1$ as in Lemma \ref{lemma 2}. Then there exist $g_2^*>0$, and a smooth control law $v_2=-\rho_2(\vartheta_2)$,
	such that for any $g_2>g_2^*$, the closed-loop system composed of \eqref{nonlinear plant}, \eqref{eq:xi}, \eqref{internal model 1} and \eqref{internal model 2} globally asymptotically converges to the optimal steady-state manifold $y_r=\pi_{1}(w)$, and hence solves the OSSR problem.
\end{theorem}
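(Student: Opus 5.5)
The plan is a last backstepping step: build a composite ISS-Lyapunov function for the extended error system \eqref{steady state of z and x error form}, treating the already-stabilized $\bar z_1$-subsystem (Lemma \ref{lemma 2}) as a block driven by $\vartheta_2$. I would take
\[
V = V_{c_1}(\bar z_1) + c_\eta\,\bar\eta_2^\top P_2\bar\eta_2 + \tfrac{1}{2}\vartheta_2^2 ,
\]
with $c_\eta>0$ to be fixed. Here $P_2$ is obtained as follows. Rescale $\bar\eta_2$ by $\mathrm{diag}(g_2,\dots,g_2^{d_3})^{-1}$ (high-gain coordinates). Then $M_{\eta_2}=A_2-G_2D_2^{-1}C_2$ becomes $g_2$ times a fixed Hurwitz companion matrix built from $H_2$, and $P_2$ solves the corresponding Lyapunov equation. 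This gives $\frac{d}{dt}(\bar\eta_2^\top P_2\bar\eta_2)\le -g_2 c|\bar\eta_2|^2$ plus cross terms.

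The $\tilde\phi_3$ term is handled by the Lipschitz bound $L_{\phi_3}|\bar\eta_2|+|G_2|L'_{\phi_3}|\vartheta_2|$. Its $|\bar\eta_2|$ part contributes a term of order $L_{\phi_3}|\bar\eta_2|^2$, independent of $g_2$ in the scaled coordinates, since $B_2$ enters the last row. Its $\vartheta_2$ part and the $G_2\Delta_{\eta_2}$ term are split by Young's inequality. With $|\Delta_{\eta_2}|^2\le\varphi_{z_1}(|\bar z_1|)+\varphi_{\vartheta_2}(|\vartheta_2|)$ this yields
\[
\dot V_{\eta}\le -\big(g_2c-\ell\big)|\bar\eta_2|^2 + \gamma(g_2)\big(\varphi_{z_1}(|\bar z_1|)+\bar\varphi_{\vartheta_2}(|\vartheta_2|)\big),
\]
where $\gamma(g_2)$ grows polynomially in $g_2$. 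Hence there is $g_2^*$ such that $g_2>g_2^*$ makes the $\bar\eta_2$-block ISS with a margin that can be made as large as needed.

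The main obstacle is the interconnection. $\Delta_{\eta_2}$ injects $\varphi_{z_1}(|\bar z_1|)$ into the $\bar\eta_2$-equation, while Lemma \ref{lemma 2} only provides a $\mathcal K_\infty$ decay $\alpha_{z_1}(|\bar z_1|^2)$, which need not dominate $\varphi_{z_1}$ globally. To handle this I would first use the Sontag--Teel change of supply rate: replace $V_{c_1}$ by $\sigma\circ V_{c_1}$ with a suitable $\mathcal K_\infty$ function $\sigma$, so that its decay rate dominates $2c_\eta\gamma(g_2)\varphi_{z_1}(|\bar z_1|)+\tau_{z_1}(|\bar z_1|)$. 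This is possible because both functions are $\mathcal K_\infty$, after possibly modifying near zero via the quadratic-type bounds coming from smoothness. The price is that the gain from $\vartheta_2$ becomes a new $\mathcal K_\infty$ function $\hat\alpha_{\vartheta_2}$. Next I choose $c_\eta$ so that $c_\eta(g_2c-\ell)\ge \tau_{\eta_2}+1$. This absorbs the $\tau_{\eta_2}|\bar\eta_2|^2$ term from $\vartheta_2\varpi_{\vartheta_2}$ into the $\bar\eta_2$ decay.

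With these choices, every cross term has been moved onto $\vartheta_2$ through $\mathcal K_\infty$ functions. I then pick $v_2=-\rho_2(\vartheta_2)$ with $\rho_2$ smooth, odd and increasing, and such that
\[
b_{\min}D_2\,\vartheta_2\rho_2(\vartheta_2)\ge \hat\alpha_{\vartheta_2}(|\vartheta_2|)+\tau_{\vartheta_2}(|\vartheta_2|)+c_\eta\gamma(g_2)\bar\varphi_{\vartheta_2}(|\vartheta_2|)+|\vartheta_2|^2 .
\]
Such a $\rho_2$ exists for smooth $\mathcal K_\infty$ majorants, using $b_2(w)\ge b_{\min}$. The resulting $\dot V$ is negative definite and $V$ is proper in $(\bar z_1,\bar\eta_2,\vartheta_2)$, uniformly in $w\in\mathcal W$ since $\mathcal W$ is compact and invariant. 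This gives global asymptotic stability of the origin of \eqref{steady state of z and x error form}.

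Finally, I translate back to the original coordinates. The origin corresponds to $\tilde\xi_r=0$, $\tilde x_0=0$, $\vartheta_1=0$, $\tilde\eta_1=0$, $\vartheta_2=0$ and $\tilde\eta_2=0$. Hence $\tilde x_1=\vartheta_1+C_r\tilde\xi_r\to0$, that is, $y_r\to\pi_1(w)$, which by Assumption \ref{A:GRE} is the minimizer of $F_c(\cdot,w)$. Boundedness of all trajectories follows from boundedness of the error states together with boundedness of the smooth maps $\pi_{(\cdot)}$ on the compact set $\mathcal W$.
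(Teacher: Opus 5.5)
Your argument is correct to the same standard of rigor as the paper's, and it closes the loop by a genuinely different route. The caveat below applies equally to the paper.

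The paper keeps $V_{c_1}$, $V_{\bar\eta_2}=\bar\eta_2^\top Q_{\eta_2}\bar\eta_2$ and $\frac12\vartheta_2^2$ as separate ISS-Lyapunov functions. It then applies the Jiang--Teel--Praly trajectory small-gain theorem, encoding the loops $(\bar z_1,\vartheta_2)$ and $(\bar\eta_2,\vartheta_2)$ in the condition $\psi_2>\max\{\iota_1,\iota_2\}$, and relies on the cascade structure of $(\bar z_1,\bar\eta_2)$ for the rest.

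You instead use the fact that every feedback path passes through $\vartheta_2$, where the control enters, and close everything by domination in one sum-type Lyapunov function. Two bookkeeping fixes are needed:
\begin{itemize}
\item Make the choices in the order $g_2\to c_\eta\to\sigma\to\rho_2$. Since $\sigma$ depends on $c_\eta$, you must fix $c_\eta$ first, not afterwards as you wrote.
\item With the scaled Lyapunov matrix, your $\bar\eta_2$-term should read $\bar\eta_2^\top D_{g_2}^{-1}P_2D_{g_2}^{-1}\bar\eta_2$. Then $c_\eta$ must grow with a power of $g_2$ to absorb $\tau_{\eta_2}|\bar\eta_2|^2$, which is harmless because $c_\eta$ is free.
\end{itemize}
With this ordering, the three-block path $\vartheta_2\to\bar z_1\to\bar\eta_2\to\vartheta_2$ is absorbed explicitly, whereas the paper's $\psi_2$ accounts only for the two two-block loops. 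You also get an explicit strict Lyapunov function and conclude GAS directly. The price is the Sontag--Teel rescaling of $V_{c_1}$, which the paper avoids by working with gain compositions.

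That rescaling is the step whose justification is not right as written.
\begin{itemize}
\item From the dissipation inequality of Lemma~\ref{lemma 2} alone, a rescaling $\sigma\circ V_{c_1}$ can only give a decay rate $\tilde\alpha$ with $\tilde\alpha(s)=O(\alpha_{z_1}(s^2))$ as $s\to0^+$. This is exactly the hypothesis in Sontag--Teel; being $\mathcal K_\infty$ is not enough.
\item Smoothness gives you quadratic \emph{upper} bounds on $\varphi_{z_1}$ and $\tau_{z_1}$ near $0$. You also need the \emph{lower} bound $\alpha_{z_1}(r)\ge cr$ near $r=0$, which Lemma~\ref{lemma 2} does not provide. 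Assumption~\ref{strong minimum phase property} allows only quartic decay near the origin, e.g.\ $\dot x_0=-5x_0^3+x_0^2x_1$ in the simulation example.
\item Likewise, a smooth $\rho_2$ with $b_{\min}D_2\vartheta_2\rho_2(\vartheta_2)$ dominating your right-hand side exists only if all the $\vartheta_2$-gains, including the new $\hat\alpha_{\vartheta_2}$, are $O(\vartheta_2^2)$ at $0$. For instance, a term like $|\vartheta_2|^{4/3}$ has no smooth majorant vanishing at $0$.
\end{itemize}

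The paper's small-gain route needs the same local property and leaves it implicit. For its $\rho_2$ to be smooth at the origin, $\psi_2(s)$, and hence $\iota_1(s)$, must be $O(s)$. That again forces $\alpha_{z_1}$ to be locally of the same order as the quadratic coupling terms. So this is not a defect peculiar to your approach. You should, however, state it as an explicit hypothesis (locally quadratic bounds in Assumption~\ref{strong minimum phase property} and Lemma~\ref{lemma 2}) rather than attribute it to the $\mathcal K_\infty$ property.
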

\begin{proof}
	We consider the $(\bar z_1,\bar\eta_2)$-subsystem now, which is treated as zero dynamics, while the $\vartheta_2$-subsystem is regarded as the external subsystem of the closed-loop system $(\bar{z}_1,\bar\eta_2,\vartheta_2)$. Observing that the $(\bar z_1,\bar\eta_2)$-subsystem is a cascaded system, we only need to study the stability of the interconnected subsystems $(\bar z_1,\vartheta_2)$ and $(\bar\eta_2,\vartheta_2)$.
	
	Defining $M_{\eta_2}^{'}=A_2-H_2D_2^{-1}C_1$, there exists a positive definite matrix $P_{\eta_2}$ satisfying
	\begin{equation*} 
		\begin{aligned}
			P_{\eta_2}M_{\eta_2}^{'}+M^{'\top}_{\eta_2}P_{\eta_2}\leq -\mu_0I 
		\end{aligned}
	\end{equation*}
	for some $\mu_0>0$. Define $D_{g_2}=\text{diag}(g_2,...,g_2^{d_3})$.
	By the same argument as the proof of Lemma \ref{lemma 2}, by reconstructing $V_{\vartheta_2}(\vartheta_2)=1/2\vartheta_2^2$ and  $V_{\bar\eta_2}(\bar{\eta}_2)=\bar{\eta}_2^\top Q_{\eta_2}\bar{\eta}_2$, where $Q_{\eta_2}=(D_{g_2}^{-1})^\top P_{\eta_2}D_{g_2}^{-1}$, we have that 
	\begin{equation*}
		\begin{aligned}
			\dot{V}_{\bar \eta_2}(\bar{\eta}_2)\leq& -\dfrac{g_2\mu_0-2|H_2|^2| P_{\eta_2}|^2-2L_{\phi_3}|P_{\eta_2}|}{\lambda_{max}(P_{\eta_2})}V_{\bar \eta_2}(\bar{\eta}_2)\\
			& +\varphi_{z_1}(|\bar{z}_1|)+\bar\varphi_{\vartheta_2}(\sqrt{2V_{\vartheta_2}(\vartheta_2)})
		\end{aligned}
	\end{equation*}
	Namely, there exists a $g_2^*$ such that for any $g_2>g_2^*$, the $\bar{\eta}_2$ dynamics is ISS with respect to the input $(\bar{z}_1,\vartheta_2)$.
	With the feedback control law designed as $v_2=-\rho_2(\vartheta_2)$, we have
	\begin{equation*}
		\begin{aligned}
			\dot{V}_{\vartheta_2}(\vartheta_2)=&\vartheta_2(\varpi_{\vartheta_2}(\bar z_1,\bar\eta_2,\vartheta_2,w)-b_2(w)D_2 \rho_2(\vartheta_2))\\
			\leq&  -(b_{min}D_2 \rho_2(\vartheta_2)\vartheta_2-\tau_{\vartheta_2}(|\vartheta_2|))\\
			&+ \tau_{z_1}(\underline{\alpha}_{z_1}^{-1}(V_{c_1}(\bar z_1)))
			+\tau_{\eta_2}{{V_{\bar\eta_2}(\bar\eta_2)}/{\lambda_{min}(Q_{\eta_2})}}\\
		\end{aligned}
	\end{equation*}
	To guarantee the stability of the interconnected systems, with any $g_2\geq g_2^*$, we design the smooth function $\rho_2(\cdot)$ to satisfy the following
	\begin{equation}\label{ineq:smallgain rho2}
		\begin{aligned}
			\rho_2(\vartheta_2)\geq (\tau_{\vartheta_2}(\vartheta_2)+\psi_2(1/2\vartheta_2^2))/({b_{min}|D_2|\vartheta_2})
		\end{aligned}
	\end{equation}
	where $\psi_2(\cdot)>\text{max}\{\iota_1(s),\iota_2(s) \}$ with
	\begin{equation*}\label{ineq:smallgain psi2}
		\begin{aligned}
			&\iota_1(s)=(c_3\alpha_{\vartheta_2}\circ\chi_0)\circ (\alpha_{z_1}\circ \bar\alpha_{z_1}^{-1})^{-1}\circ (c_4\tau_{z_1}\circ\underline{\alpha}_{z_1}^{-1})(s)\\
			&\iota_2(s)=(c_3\bar\varphi_{\vartheta_2}\circ\chi_0)\circ\chi_4^{-1}\circ(c_4\tau_{\eta_2}/\lambda_{min}(Q_{\eta_2}))(s)
		\end{aligned}
	\end{equation*}
	for some $c_3, c_4>1$ with $\chi_0(s)=\sqrt{2s}$, $\chi_4(s)=(g_2\mu_0-2|H_2|^2| P_{\eta_2}|^2-2L_{\phi_3}|P_{\eta_2}|)s/{\lambda_{max}(P_{\eta_2})}$.
	Then by small gain theorem \cite{jiang1994small}, we can conclude that the  overall system composed of  \eqref{nonlinear plant}, \eqref{eq:xi}, \eqref{internal model 1}, and \eqref{internal model 2}  is globally asymptotically stable convergent to the steady-state manifold ${y}_r=\pi_{1}(w)$. The result of the theorem immediately follows.
\end{proof}

\section{Simulation}\label{sec:simulation}

We consider the nonlinear plant (see \cite{huang2004nonlinear}) described by
\begin{equation*}\label{eq:sim_plant}
	\begin{aligned}
		\dot{x}_0 &= -5 x_0^3 +  x_0^2 x_1, \\
        \dot{x}_1 &= x_0 + 0.1x_1 + 12.5 x_2, \\
        \dot{x}_2 &= -0.2 w_1 + \left(0.1 x_0 - 0.8 w_2\right)x_1 + \sin^2(w_1 x_2) + u, \\
        y_r &= x_1 ,
	\end{aligned}
\end{equation*}
driven by the neutrally-stable exo-system $\dot w_1=-0.5w_2, \dot w_2=0.5w_1$.
The cost is chosen as
\begin{equation*}\label{eq:sim_cost}
	F_c(x_1,w)=w_1^6-2x_1w_1^4+\frac{1}{2}w_1^4+x_1^2w_1^2-x_1w_1^2+\frac{1}{2}x_1^2.
 \end{equation*}
whose minimizer is $x_1^*=\pi_1(w)=w_1^2$.

We apply the proposed two-layer optimal regulation architecture. The optimization layer is a prime-form internal model driven by gradient measurements with $\kappa\equiv0$,
\begin{equation*}\label{eq:sim_opt_layer}
	\dot\xi_r=A_r\xi_r+B_r\phi_1(\xi_r)-G_r\,\nabla F_c(y_r,w),\qquad y^r_{r}=C_r\xi_r,
\end{equation*}
where $(A_r,B_r,C_r)$ is the prime triple, and
$G_r=\mathrm{col}(g_r\lambda_1,g_r^2\lambda_2,g_r^3\lambda_3)$ with $(\lambda_1,\lambda_2,\lambda_3)=(3,3,1)$.

The control layer consists of two pre-processing internal models and stabilizers
\begin{equation*}\label{eq:sim_ctrl_layer}
	\begin{aligned}
		& \dot\eta_i=A_i\eta_i+B_i\phi_{i+1}(\eta_i)+G_i v_i,\quad
		v_i=-\rho_i(\vartheta_i),\\
		&u=C_2\eta_2+D_2v_2,
	\end{aligned}
\end{equation*}
where $(A_i,B_i,C_i)$ ($i=1,2$) are prime triples of order $3$,
and
\(
	G_1=\mathrm{col}(g_1\nu_1,g_1^2\nu_2,g_1^3\nu_3),\quad 
	G_2=\mathrm{col}(g_2\mu_1,g_2^2\mu_2,g_2^3\mu_3),
\)
with $(\nu_1,\nu_2,\nu_3)=(\mu_1,\mu_2,\mu_3)=(3,3,1)$.
The error coordinates are
\(
\vartheta_1=x_1-y_{r}^r,\qquad
	\vartheta_2=x_2-C_1\eta_1-D_1v_1,
\)
and the stabilizers are selected as
\(
v_1=-15\,\vartheta_1(1+\vartheta_1^2)^{4},\qquad
	v_2=-40\,\vartheta_2(1+\vartheta_2^2)^{3},
\)
and we set
\(
	g_r=1.5,\quad g_1=4.0,\quad g_2=3.5,\quad D_1=D_2=1.
\)

Fig.~\ref{fig:error} shows the regulation errors $y_r^r-\pi_1(w)$, $x_1-\pi_1(w)$, and the gradients $\partial F_c/\partial x_1$ evaluated along the closed-loop trajectory.
The figure shows that the generated reference and the regulated output both converge to the optimal steady-state trajectory, while the measured gradient signal vanishes asymptotically.

\begin{figure}[ht]
	\centering
	\includegraphics[width=1.0\linewidth]{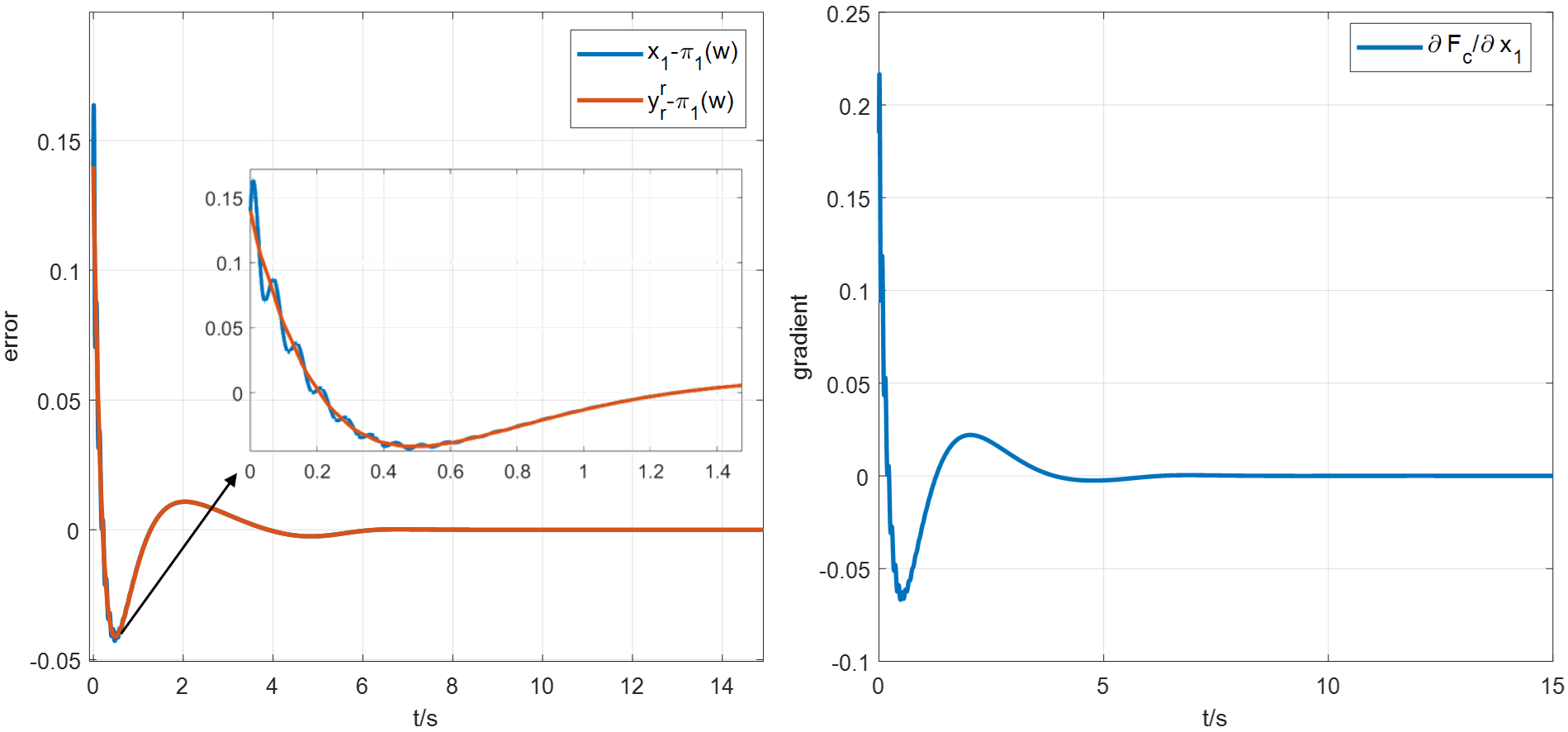}
	\caption{Trajectories of the regulation errors and real-time gradients.}
	\label{fig:error}
\end{figure}


\section{Conclusion}\label{sec:conclusion}
In this paper, we studied the OSSR problem for a class of nonlinear systems in lower-triangular form subject to disturbances generated by a neutrally stable nonlinear exo-system. By introducing the GRE, we incorporated the steady-state optimality condition into a constructive nonlinear regulation framework. On this basis, we proposed a two-layer structure in which a steady-state generator combined with gradient-feedback dynamics serves as the optimization layer to generate the optimal reference, while a backstepping-based state-feedback controller in the control layer ensures tracking. Future research will focus on extending the framework to handle state and input constraints, output-feedback designs with unknown exo-systems, and applications to large-scale systems.

\section{Appendix}

\subsection{Proof of Lemma \ref{lemma 1}}\label{prof of lemma1}
\begin{proof}
	Letting $D_{g_r}=\text{diag}(g_r,g_r^2,...,g_r^{d_1})$ and $\bar{\xi}_r=D_{g_r}^{-1}\tilde{\xi}_r$,
	we obtain that
	\begin{equation} \label{eq:bar xi}
		\begin{aligned}
			\dot{\bar{\xi}}_r=&g_r(A_r-\omega(\bar y_r^r,w)G_0C_r)\bar{\xi}_r+B_r\Delta(\bar{\xi}_r,\pi_{\xi_r}(w))\\
			&-G_0\Omega_2(e,w)+D_{g_r}^{-1}\kappa(e)\\
		\end{aligned}
	\end{equation}
	where $G_0=\text{col}(\lambda_1,...,\lambda_{d_1})$, $\Delta(\bar{\xi}_r,\pi_{\xi_r}(w))=\dfrac{1}{g_r^{d_1}}(\phi_{1}(D_{g_r}\bar{\xi}_r+\pi_{\xi_r}(w))-\phi_{1}(\pi_{\xi_r}(w)))$.
	
	It is further observed that the function $\phi_{1}(\cdot)$ is globally Lipschitz by Assumption \ref{immersion condition assumption}, and we have
	\begin{equation*} 
		\begin{aligned}
			|\Delta(\bar{\xi}_r,\pi_{\xi_r}(w))|\leq\dfrac{L_{\phi_1}}{g_r^{d_1}}|D_{g_r}\bar{\xi}_r|\leq L_{\phi_1}|\bar{\xi}_r|
		\end{aligned}
	\end{equation*}
	for any $g_r>1$,  where $L_{\phi_1}$ is the Lipschitz constant.
	
	By using Lemma \ref{Lemma Hurwitz}, there exists $G_0$ which makes the matrix $M_r(t)=A_r-\omega(w(t))G_0C_r$ Hurwitz,
	where $\omega(\bar y_r^r,w)$ is denoted as $\omega(t)$ for simplicity, and $M_r(t)$ is in the form as $M_n(t)$ in Lemma \ref{Lemma Hurwitz}, 
	which is to say there exists $\lambda_0>0$ and positive definite matrix $P$ such that
	\begin{equation*} 
		\begin{aligned}
			PM_r(t)+M_r^\top(t)P\leq -\lambda_0I 
		\end{aligned}
	\end{equation*}
	

	With these in mind, we construct function $V_r(\bar{\xi}_r)=\bar{\xi}_r^\top P \bar{\xi}_r$. Along the trajectory decided by \eqref{eq:bar xi}, we have
	\begin{equation*} \label{Vr dynamics}
		\begin{aligned}
			&\dot{V}_r(\bar{\xi}_r)\\&
			 \leq g_r\bar{\xi}_r^\top[M_r(t)^\top P+PM_r(t)]\bar{\xi}_r+(2+2L_{\phi_1}|P|)|\bar{\xi}_r|^2 \\
			 &+|PG_0|^2\Omega_2^2(e,w)+|PD_{g_r}^{-1}|^2|\kappa(e)|^2\\
			& \leq -(g_r\lambda_0-2-2L_{\phi_1}|P|)|\bar{\xi}_r|^2+\alpha_{e}(|e|)
		\end{aligned}
	\end{equation*}
	where $\alpha_{e}(\cdot)$ is a $\mathcal{K}_\infty$ function which satisfies $\alpha_{e}(|e|)\geq |PG_0|^2\Omega_2^2(e,w)+|PD_{g_r}^{-1}|^2|\kappa(e)|^2$.
	Thus there exists $g_r^*$ satisfying \(g_r^*\geq ({2+2L_{\phi_1}|P|})/{\lambda_0}\)
	such that by choosing $g_r>g_r^*$, the $\bar{\xi}_r$ subsystem is ISS, with $e$ as input and $\bar{\xi}_r$ as state.
	Therefore, by constructing $V_o(\tilde{\xi}_r)=\tilde{\xi}_r^\top Q\tilde{\xi}_r$, where $Q=(D_{g_r}^{-1})^\top PD^{-1}_{g_r}$, we have
	\begin{equation}\label{eq:dot Vo}
		\begin{aligned}
			\dot{V}_o(\tilde{\xi}_r)
			&\leq -\alpha_{\xi}|\tilde{\xi}_r|^2 +\alpha_{e}(|e|) \\
		\end{aligned}
	\end{equation}
	where $\alpha_{\xi}=D_g^{-2}(g_r\lambda_0I-I-2L_{\phi_1}|P|)>0$.
	
\end{proof}

\subsection{Proof of Lemma \ref{lemma 2}}\label{prof of lemma2}
We can rewrite the $\vartheta_1$ dynamics as
\begin{equation}\label{eq:vartheta1}
	\dot{\vartheta}_1=\varpi_{\vartheta_1}(\tilde{x}_0,\tilde \xi_r,\bar \eta_1,\vartheta_1,\vartheta_2,w)+b_1(w)D_1v_1
\end{equation}
where the smooth function $\varpi_{\vartheta_1}$ satisfies $\varpi_{\vartheta_1}(0,0,0,0,0,w)=0$,
and there exist positive constant $\beta_{\vartheta_2}$, $\beta_{\eta_1}$ and class $\mathcal{K}_\infty$ functions $\beta_{x_0}(\cdot)$,  $\beta_{\xi_r}(\cdot)$, $\beta_{\vartheta_1}(\cdot)$ such that
\begin{equation*}\label{ineq:varpi}
	\begin{aligned}
		&\vartheta_1{\varpi_{\vartheta_1}}(\tilde{x}_0,\tilde \xi_r,\bar \eta_1,\vartheta_1,\vartheta_2,w)\\
		&\leq \beta_{x_0}(|\tilde{x}_0|)+\beta_{\xi_r}(|\tilde \xi_r|)+\beta_{\eta_1}|\bar \eta_1|^2+\beta_{\vartheta_1}(|\vartheta_1|)+\beta_{\vartheta_2}|\vartheta_2|^2
	\end{aligned}
\end{equation*}

For simplicity, we rewrite the $\bar \eta_1$-dynamics as
\begin{equation*}\label{eq:bareta1}
	\begin{aligned}
		\dot{\bar{\eta}}_1=&M_{\eta_1}\bar{\eta}_1+B_1\tilde{\phi}_2(\bar{\eta}_1+b_1^{-1}(w)G_1D_1^{-1}\vartheta_1,w)\\&+G_1\Delta_{\eta_1}(\tilde{x}_0,\tilde \xi_r,\vartheta_1,\vartheta_2,w)
	\end{aligned}
\end{equation*}
where $M_{\eta_1}=A_1-G_1D_1^{-1}C_1$, and the smooth function $\Delta_{\eta_1}$ satisfies $\Delta_{\eta_1}(0,0,0,0,w)=0$,
and there exist positive constant $L_{\phi_2}^{'}$, $\gamma_{\vartheta_2}$ and class $\mathcal{K}_\infty$ functions$\gamma_{\xi_r}(\cdot)$, $\gamma_{x_0}(\cdot)$,  $\gamma_{\vartheta_1}(\cdot)$ such that
\begin{equation*}
	\begin{aligned}
		&|\tilde{\phi}_2(\bar{\eta}_1+b_1^{-1}(w)G_1D_1^{-1}\vartheta_1,w)|\leq L_{\phi_2}|\bar \eta_1|+ |G_1|L_{\phi_2}^{'}|\vartheta_1|\\
		&|{\Delta_{\eta_1}}|^2\leq \gamma_{x_0}(|\tilde{x}_0|)+\gamma_{\xi_r}(|\tilde \xi_r|)+\gamma_{\vartheta_1}(|\vartheta_1|)+\gamma_{\vartheta_2}|\vartheta_2|\\
	\end{aligned}
\end{equation*}
and we let $\bar\gamma_{\vartheta_1}(|\vartheta_1|)=L_{\phi_2}^{'2}|\vartheta_1|^2+\gamma_{\vartheta_1}(|\vartheta_1|)$, which is a $\mathcal{K}_\infty$ function.

We consider the $(\tilde{\xi}_r,\tilde{x}_0,\bar{\eta}_1,\vartheta_1)$-subsystem, where $\vartheta_1$ is the external dynamics with $v_1$ as input and $\vartheta_1$ as output, whose dynamics are given in \eqref{eq:vartheta1}, and $(\tilde{\xi}_r,\tilde{x}_0,\bar{\eta}_1)$-subsystem is regarded as zero dynamics.
	With assumption \ref{strong minimum phase property}, from \eqref{ineq:minimum phase} we have
	\begin{equation}\label{ineq:Vtildex0}
		\begin{aligned}
			\dot{V}_0(\tilde{x}_0)\leq-\alpha_0(|\tilde{x}_0|)+\alpha_{\vartheta_1}(|\vartheta_1|)+\alpha_{\xi_r}(|\tilde \xi_r|)
		\end{aligned}
	\end{equation}
	where $\alpha_{\vartheta_1}(\cdot)$ and $\alpha_{\xi_r}(\cdot)$ are bounded $C^1$ class $\mathcal{K}_\infty$ functions.
    Then we study the stability of the interconnected systems $(\tilde\xi_r,\vartheta_1)$, $(\bar\eta_1,\vartheta_1)$ and $(\tilde x_0,\vartheta_1)$.
	
	For the $\vartheta_1$-subsystem, we construct the Lyapunov function $V_{\vartheta_1}=1/2\vartheta_1^2$. 
	For $\bar{\eta}_1$-subsystem, we define $\hat{\eta}_1=D_{g_1}^{-1}\bar{\eta}_1$, where $D_{g_1}=\text{diag}(g_1,...,g_1^{d_2})$, and we have 
	\begin{equation} \label{eq:hat eta}
		\begin{aligned}
			\dot{\hat{\eta}}_1=&g_1(A_1-H_1D_1^{-1}C_1)\hat{\eta}_1+g_1^{-d_2}L_{\phi_2}\tilde\phi_2+H_1\Delta_{\eta_1}
		\end{aligned}
	\end{equation}
	Define $M_{\eta_1}^{'}=A_1-H_1D_1^{-1}C_1$, and we construct $V_{\eta_1}(\hat{\eta}_1)=\hat{\eta}_1^\top P_{\eta_1}\hat{\eta}_1$, where $P_{\eta_1}$ is a positive definite matrix satisfying $P_{\eta_1}M_{\eta_1}^{'}+M^{'\top}_{\eta_1}P_{\eta_1}\leq -\nu_0I$
	for some $\nu_0>0$.
	Along the trajectory decided by \eqref{eq:hat eta}, we have
    
	\begin{equation*}
		\begin{aligned}
			&\dot{V}_{\eta_1}(\hat{\eta}_1)
			\leq g_1\hat{\eta}_1^\top(P_{\eta_1}M_{\eta_1}^{'}+M^{'\top}_{\eta_1}P_{\eta_1})\hat{\eta}_1+2(H_1\Delta_{\eta_1})^\top P_{\eta_1}\hat{\eta}_1 \\
			&+2(g_1^{-d_2}\tilde\phi_2)^\top  P_{\eta_1}\hat{\eta}_1\\
			&\leq -(g_1\nu_0-2|H_1|^2| P_{\eta_1}|^2-2L_{\phi_2}|P_{\eta_1}|)|\hat{\eta}_1|^2\\
			& +\gamma_{x_0}(|\tilde{x}_0|)+\gamma_{\xi_r}|\tilde \xi_r|+\bar\gamma_{\vartheta_1}(|\vartheta_1|)+\gamma_{\vartheta_2}|\vartheta_2|
		\end{aligned}
	\end{equation*}
	By reconstructing  $V_{\bar\eta_1}(\bar{\eta}_1)=\bar{\eta}_1^\top Q_{\eta_1}\bar{\eta}_1$, where $Q_{\eta_1}=(D_{g_1}^{-1})^\top P_{\eta_1}D_{g_1}^{-1}$, we have
	\begin{equation*}
		\begin{aligned}
			&\dot{V}_{\bar \eta_1}(\bar{\eta}_1)
			\leq -(g_1\nu_0-2|H_1|^2|P_{\eta_1}|^2-2L_{\phi_2}|P_{\eta_1}|)|D_{g_1}^{-1}\bar{\eta}_1|^2\\
			& +\gamma_{x_0}(|\tilde{x}_0|)+\gamma_{\xi_r}|\tilde \xi_r|+\bar\gamma_{\vartheta_1}(|\vartheta_1|)+\gamma_{\vartheta_2}|\vartheta_2|\\
			&\leq -(g_1\nu_0-2|H_1|^2| P_{\eta_1}|^2-2L_{\phi_2}|P_{\eta_1}|)V_{\bar \eta_1}(\bar{\eta}_1)/{\lambda_{max}(P_{\eta_1})}\\
			& +\gamma_{x_0}(|\tilde{x}_0|)+\gamma_{\xi_r}|\tilde \xi_r|+\bar\gamma_{\vartheta_1}(\sqrt{2V_{\vartheta_1}(\vartheta_1)})+\gamma_{\vartheta_2}|\vartheta_2|
		\end{aligned}
	\end{equation*}
	Thus, we can prove that there exists a $g_1^*$ which satisfies 
	\[
	g_1^*\geq 2(|H_1|^2| P_{\eta_1}|^2+L_{\phi_2}|P_{\eta_1}|)/\nu_0
	\]
	such that for any $g_1>g_1^*$, the $\bar{\eta}_1$ dynamics is ISS with respect to the input $(\tilde \xi_r,\tilde x_0,\vartheta_1,\vartheta_2)$. 
	
	Along the trajectory decided by \eqref{eq:vartheta1}, with the feedback control law designed as $v_1=-\rho_1(\vartheta_1)$, we have
	\begin{equation*}
		\begin{aligned}
			&\dot{V}_{\vartheta_1}(\vartheta_1)=\vartheta_1(\varpi_{\vartheta_1}(\tilde{x}_0,\tilde \xi_r,\bar \eta_1,\vartheta_1,\vartheta_2,w)-b_1(w)D_1 \rho_1(\vartheta_1))\\
			&\leq  -(b_{min}D_1 \rho_1(\vartheta_1)\vartheta_1-\beta_{\vartheta_1}(|\vartheta_1|))+ \beta_{x_0}(\underline{\alpha}_0^{-1}(V_0(\tilde x_0)))\\
			&+\beta_{\xi_r}(\sqrt{\dfrac{V_o(\tilde\xi_r)}{\lambda_{min}(Q)}})
			+\beta_{\eta_1}{\dfrac{V_{\bar\eta_1}(\bar\eta_1)}{\lambda_{min}(Q_{\eta_1})}}+\beta_{\vartheta_2}|\vartheta_2|^2\\
		\end{aligned}
	\end{equation*}
	With \eqref{ineq:Vtildex0}, we have
	\[
	\dot{V}_0(\tilde{x}_0)\leq-\alpha_0(\bar\alpha_0^{-1}(V_0(\tilde{x}_0)))+\alpha_{\vartheta_1}(\sqrt{2V_{\vartheta_1}(\vartheta_1)})+\alpha_{\xi_r}(|\tilde \xi_r|)
	\]
	and with  \eqref{eq:dot Vo}, we have
	\[
	\dot{V}_o(\tilde{\xi}_r)
	\leq -{\alpha_{\xi}}V_o(\tilde\xi_r)/{\lambda_{max}(Q)} +\alpha_{e}(\sqrt{2V_{\vartheta_1}(\vartheta_1)}) 
	\]
	To guarantee the stability of the interconnected systems, with any $g_1\geq g_1^*$, we design the smooth function $\rho_1(\cdot)$ to satisfy the following
	\begin{equation*}\label{ineq:smallgain rho1}
		\begin{aligned}
			\rho_1(\vartheta_1)\geq (\beta_{\vartheta_1}(\vartheta_1)+\psi_1(1/2\vartheta_1^2))/({b_{min}|D_1|\vartheta_1})
		\end{aligned}
	\end{equation*}
	where $\psi_1(\cdot)>\text{max}\{\varrho_1(s),\varrho_2(s),\varrho_3(s) \}$ with
	\begin{equation*}\label{ineq:smallgain psi1}
		\begin{aligned}
			&\varrho_1(s)=(c_1\alpha_{\vartheta_1}\circ\chi_0)\circ (\alpha_0\circ \bar\alpha_0^{-1})^{-1}\circ (c_2\beta_{x_0}\circ\underline{\alpha}_0^{-1})(s)\\
			& \varrho_2(s)=(c_1\alpha_{e}\circ\chi_0)\circ (c_2\lambda_{max}(Q)/{\alpha_\xi})\beta_{\xi_r}\circ \chi_1(s)\\
			&\varrho_3(s)=(c_1\bar\gamma_{\vartheta_1}\circ\chi_0)\circ\chi_3^{-1}\circ(c_2\beta_{\eta_1}/\lambda_{min}(Q_{\eta_1}))(s)
		\end{aligned}
	\end{equation*}
	for some $c_1, c_2>1$ with $\chi_0(s)=\sqrt{2s}$, $\chi_1(s)=\sqrt{s/\lambda_{min}(Q)}$, $\chi_3(s)=\dfrac{s}{\lambda_{max}(P_{\eta_1})}(g_1\nu_0-2|H_1|^2| P_{\eta_1}|^2-2L_{\phi_2}|P_{\eta_1}|)$.  Then by small gain theorem \cite{jiang1994small}, we can conclude that the $(\tilde{z}_1,\vartheta_1)$-subsystem is ISS, with input $\vartheta_2$, which completes the proof.
\bibliographystyle{IEEEtran} 
\bibliography{ref}             
\end{document}